\documentclass[12pt]{article}
\usepackage{fullpage,amsmath, amssymb, amsfonts, amsthm, mathrsfs,mathtools,enumerate, enumitem,indentfirst}
\usepackage{bbm}
\usepackage{tikz}
\usetikzlibrary{arrows.meta}
\usepackage{xcolor}
\usepackage{braket}
\usepackage[ruled,lined]{algorithm2e}
\theoremstyle{definition}
\usepackage{hyperref}
\newtheorem{theorem}{Theorem}
\newtheorem{lemma}{Lemma}

\newtheorem{remark}[theorem]{Remark}

\newcommand{\KL}{\mathrm{KL}}

\title{Counterexample to a Proposed Capacity Characterization of the Relay Channel}
\author{Chun Hei Michael Shiu\thanks{Chun Hei Michael Shiu is with the Department of Electrical and Electronic Engineering, University of British Columbia, Vancouver, Canada. Email: {shiuchm@ece.ubc.ca}}}
\date{\today}
\makeatletter
\newcommand{\skipitems}[1]{%
  \addtocounter{\@enumctr}{#1}%
}
\begin{document}
\maketitle

\begin{abstract}
    We present a counterexample to the relay channel capacity characterization proposed in a recent work, which is based on properties of typical sequences. The counterexample uses a binary relay channel with a noiseless source-destination component and binary symmetric relay links. An achievable scheme exceeds the proposed capacity characterization, thereby disproving the characterization of the capacity for general relay channels. We highlight that the counterexample is obtained with the assistance of ChatGPT-6 Astra and the proofs are simplified with human effort.
\end{abstract}

\section{Introduction}

In the study of network information theory, a fundamental model of cooperative communication is the relay channel, introduced by van der Meulen in his seminal paper \cite{van71}. The relay channel models a point-to-point communication aided by an intermediate node, as arises in relay-assisted cellular networks for coverage extension \cite{kad10}. The characterization of the capacity for general relay channels has been one of the open problems in network information theory.

Upper and lower bounds on the capacity for relay channels have been extensively studied. The cutset upper bound is established in \cite{cov79}, which is known to be loose for certain relay channels. Several works have strengthened the cutset upper bound \cite{wu17improve, el22} or established tighter upper bounds \cite{wu17} for some specific classes of relay channels. On the other hand, lower bounds based on various schemes such as decode-forward, partial decode-forward, compress-forward are established. Matching bounds, and hence the capacity for some classes of relay channels are known. For instance, degraded \cite{cov79}, semi-deterministic \cite{gam82}, and relay channels with orthogonal sender components \cite{el05}.

In a recent work \cite{pon26}, the author claimed that the capacity of the relay channel has been solved (Theorem 4 in \cite{pon26}), which can be characterized by three relaying schemes: (i) decode-forward (DF), (ii) coordinated compress-forward (C-CF), and (iii) uncoordinated compress-forward (U-CF). The proposed optimality proof relies on typicality. Specifically, the converse (outer bound) derives a compression-cutset mutual information bound through typical sequence analysis and a sphere-packing argument; the achievability (inner bound) combines the decode-forward, uncoordinated compress-forward without Wyner-Ziv coding, and coordinated compress-forward scheme as encoding, with a joint typicality decoding.

However, we have identified a flaw in the proposed converse (outer bound) in Section V of \cite{pon26}: an unjustified identification of the distribution induced by an arbitrary block relay encoder with an auxiliary compression distribution. All equation numbers in the remainder of this paragraph refer to the equations in \cite{pon26}. In the proof, after defining the compression codebook and relay compression mapping as in Eq. (127) and Eq. (128), respectively, they assign the approximate probability $2^{-n_k H(\hat{Y}_r |X_r)}$ to the actual compression-codebook elements as in Eq. (133), invoking the typicality inclusion in Eq. (134). However, membership in a typical set does not determine the corresponding codeword's selection probability under the relay mapping, as the probability under the auxiliary product distribution need not coincide with its probability under the distribution induced by the block encoder. Consequently, the conditional probability and estimates in Eqs. (136) -- (138) are not justified, and the argument fails to establish the claimed reduction to single-letterized compression.

In this paper, we present an explicit counterexample to the capacity characterization proposed in \cite{pon26} for relay channels. Specifically, we exhibit a relay channel for which a partial decode-forward scheme achieves a rate strictly larger than the proposed expression, thereby disproving the claimed characterization and illustrating the consequence of the flaw. The remainder of this paper is organized as follows. In Section \ref{sec: relay channel} we introduce the relay-channel model and relevant preliminaries. In Section \ref{sec: counterexample and proofs} we present the counterexample and establish the strict gap between an achievable rate and the characterization in \cite{pon26}. Section \ref{sec: concluding remarks} concludes the paper with a discussion.

\subsection*{Notation}
In this paper, we use calligraphic letters to denote sets (alphabets), capital letters to denote random variables, and lowercase letters to denote realizations of the corresponding random variables. All logarithm and entropic quantities are to base $2$, unless stated otherwise.
We use $h_2(\cdot)$ to denote the binary entropy function. We denote $\mathrm{BSC}(\delta)$ to be the binary symmetric channel (BSC) with crossover probability $\delta \in [0,1]$.

\section{Relay Channel and Proposed Capacity} \label{sec: relay channel}
In this paper, we consider the three-node discrete memoryless relay channel (DM-RC). A sender wishes to transmit a message $M$ to the receiver with the help of a relay. The DM-RC is specified by four finite alphabets: $\mathcal X_s$ (source input alphabet), $\mathcal X_r$ (relay input alphabet), $\mathcal Y_r$ (relay output alphabet), $\mathcal Y_d$ (destination output alphabet),\footnote{Here input and output are viewed from the channel's perspective.} together with a collection of conditional probability mass functions (pmfs) $p(y_r, y_d|x_s, x_r)$ on $\mathcal Y_r \times \mathcal Y_d$ for each $(x_s, x_r) \in \mathcal X_s \times \mathcal X_r$. A relay channel model is depicted in Figure \ref{fig:relay-channel}.

\begin{figure}[t]
    \centering
    \begin{tikzpicture}[
        >=Stealth,
        line width=0.7pt,
        block/.style={
            draw,
            minimum height=1cm,
            align=center,
            inner sep=8pt
        }
    ]
        % Nodes
        \node[block, minimum width=2cm]
            (encoder) at (0,0) {Encoder};

        \node[block, minimum width=3cm]
            (channel) at (3.8,0)
            {$p(y_r,y_d\mid x_s,x_r)$};

        \node[block, minimum width=3cm]
            (relay) at (3.8,2.4) {Relay encoder};

        \node[block, minimum width=2cm]
            (decoder) at (7.6,0) {Decoder};

        % Source and destination links
        \draw[->] (-2.3,0)
            -- node[above] {$M$} (encoder.west);

        \draw[->] (encoder.east)
            -- node[above] {$X_s^n$} (channel.west);

        \draw[->] (channel.east)
            -- node[above] {$Y_d^n$} (decoder.west);

        \draw[->] (decoder.east)
            -- node[above] {$\hat{M}$} (9.9,0);

        % Relay observation and transmission links
        \draw[->] ([xshift=-0.9cm]channel.north)
            -- node[left] {$Y_r^n$}
            ([xshift=-0.9cm]relay.south);

        \draw[->] ([xshift=0.9cm]relay.south)
            -- node[right] {$X_r^n$}
            ([xshift=0.9cm]channel.north);
    \end{tikzpicture}
    \caption{Point-to-point Communication with a Relay}
    \label{fig:relay-channel}
\end{figure}

For a positive integer $n$ and a rate $R \geq 0$, an $(n, R)$ code for DM-RC consists of 
\begin{itemize}
    \item a message set $[1 : 2^{nR}] \triangleq \{1, 2, \ldots, 2^{nR}\}$;
    \item a source encoder that maps each message $m \in [1:2^{nR}]$ to a codeword $x_s^n(m)$;
    \item a sequence of relay encoders $f_{ri} : \mathcal Y_r^{i-1} \to \mathcal X_r$, $i \in [1:n]$, such that $X_{ri} = f_{ri}(Y_r^{i-1})$;
    \item a decoder that estimate $\hat{m}$ or output an error symbol $\mathrm{e}$ to the received sequence $y_d^n \in \mathcal Y_d^n$.
\end{itemize}

Assuming that $M$ is uniformly distributed over the message set, a rate $R$ is said to be achievable for a DM-RC if there exists a sequence of $(n, R)$ codes such that $\lim_{n \to \infty} \Pr(\hat{M} \neq M) = 0$. The capacity of a DM-RC is the supremum of all achievable rates.

In the following, we introduce the results presented in \cite{pon26}. Before stating the main result, we define the necessary quantities and expressions. In the paper, they considered an auxiliary random variable $\hat{\mathcal Y}_r$ at the relay for implementing relay compressions. Compress-forward (CF) schemes 
% the joint distribution on $\mathcal X_s \times \mathcal X_r \times \mathcal Y_r \times \hat{\mathcal Y}_r \times \mathcal Y_d$ can be factorized in the following way:
% \begin{align*}
%     p(x_s, x_r, y_r, \hat{y}_r, y_d) & = p(x_s, x_r) \cdot p(y_r, y_d | x_s, x_r) \cdot p(\hat{y}_r | y_r, x_r) \\
%     & =  p(x_s, x_r) \cdot p(y_r, y_d | x_s, x_r) \cdot p(\hat{y}_r | x_r) \cdot p(y_r | \hat{y}_r, x_r). 
% \end{align*}
% Therefore, the scheme 
can be characterized by a triple of distributions: (i) the input distribution $p_{\mathrm{cf}}(x_s, x_r)$ over $\mathcal X_s \times \mathcal X_r$, (ii) the compression prior $p_{\mathrm{cf}}(\hat{y}_r | x_r)$ over $\hat{\mathcal Y}_r \times \mathcal X_r$, and (iii) the reverse-compression channel $p_{\mathrm{cf}}(y_r |\hat{y}_r, x_r)$ over $\mathcal Y_r \times \hat{\mathcal Y}_r \times \mathcal X_r$. Define the set of CF feasible distribution triplets be
\begin{align*}
    \mathcal P_{\mathrm{cf}} := \{\mathbf{p} =( p_{\mathrm{cf}}(x_s,x_r), p_{\mathrm{cf}}(y_r | \hat{y}_r , x_r), p_{\mathrm{cf}}(\hat{y}_r|x_r))\}. %\label{eq: total feasible set}
\end{align*}
We also define the following forward relay channel:
\begin{align*}
    p_{\mathrm{for}}(y_r|\hat{y}_r, x_r) := \frac{\sum_{x_s, y_d} p_{\mathrm{cf}}(x_s, x_r) \cdot p(y_r, y_d | x_s, x_r) \cdot p_{\mathrm{cf}}(\hat{y}_r | x_r)}{\sum_{x_s, \hat{y}_r, y_d} p_{\mathrm{cf}}(x_s, x_r) \cdot p(y_r, y_d | x_s, x_r) \cdot p_{\mathrm{cf}}(\hat{y}_r | x_r)} %\label{eq: forward-compression channel}
\end{align*}

We say that the CF distribution triplets satisfy the independent condition if
\begin{align}
    p_{\mathrm{cf}}(x_s, x_r) =  p_{\mathrm{cf}}(x_s) \cdot p_{\mathrm{cf}}(x_r), \label{eq: independent condition}
\end{align}
and we say that they satisfy the correlated condition if
\begin{align}
    p_{\mathrm{cf}}(y_r|\hat{y}_r, x_r) = p_{\mathrm{for}}(y_r | \hat{y}_r, x_r). \label{eq: correlated condition}
\end{align}
With these conditions, we may define the set of independent feasible distribution triplets and the set of correlated feasible distribution triplets to be 
\begin{align*}
    \mathcal P_{\mathrm{ind}} &:= \{ \mathbf{p} = (p_{\mathrm{cf}}(x_s, x_r), p_{\mathrm{cf}}(y_r | \hat{y}_r, x_r), p_{\mathrm{cf}}(\hat{y}_r | x_r)) : \mathbf{p} \text{ satisfies the independent condition}\}, \\
    \mathcal P_{\mathrm{cor}} & := \{\mathbf{p} = (p_{\mathrm{cf}}(x_s, x_r), p_{\mathrm{cf}}(y_r | \hat{y}_r, x_r), p_{\mathrm{cf}}(\hat{y}_r | x_r)) : \mathbf{p} \text{ satisfies the correlated condition}\},
\end{align*}
respectively. Using the reverse-compression channels, we may also define the forward-compression channels as 
\begin{align*}
    p_{\mathrm{for}}(\hat{y}_r | y_r, x_r) & := \frac{\sum_{x_s, x_r}p_{\mathrm{cf}}(x_s, x_r) \cdot p_{\mathrm{cf}}(\hat{y}_r|x_r) \cdot p_{\mathrm{for}}(y_r | \hat{y}_r, x_r)}{\sum_{x_s, x_r, y_r}p_{\mathrm{cf}}(x_s, x_r) \cdot p_{\mathrm{cf}}(\hat{y}_r|x_r) \cdot p_{\mathrm{for}}(y_r | \hat{y}_r, x_r)}, \\
    p_{\mathrm{cf}}(\hat{y}_r | y_r, x_r) & := \frac{\sum_{x_s, x_r}p_{\mathrm{cf}}(x_s, x_r) \cdot p_{\mathrm{cf}}(\hat{y}_r|x_r) \cdot p_{\mathrm{cf}}(y_r | \hat{y}_r, x_r)}{\sum_{x_s, x_r, y_r}p_{\mathrm{cf}}(x_s, x_r) \cdot p_{\mathrm{cf}}(\hat{y}_r|x_r) \cdot p_{\mathrm{cf}}(y_r | \hat{y}_r, x_r)}.
\end{align*}
Hence, we have the following joint distributions on $\mathcal X_s \times \mathcal X_r \times \mathcal Y_r \times \hat{\mathcal Y}_r \times \mathcal Y_d$:
\begin{align}
    p_{\mathrm{for}}(x_s, x_r, y_r, \hat{y}_r, y_d) & := p_{\mathrm{cf}}(x_s, x_r) \cdot p(y_r, y_d| x_s, x_r) \cdot p_{\mathrm{for}}(\hat{y}_r | y_r ,x_r), \\ %\label{eq: joint for} \\
    p_{\mathrm{cf}}(x_s, x_r, y_r, \hat{y}_r, y_d) & := p_{\mathrm{cf}}(x_s, x_r) \cdot p(y_r, y_d| x_s, x_r) \cdot p_{\mathrm{cf}}(\hat{y}_r | y_r ,x_r). %\label{eq: joint cf}
\end{align}

In the following, we use $H$ and $I$ to denote the standard Shannon entropy and mutual information, while the subscript ``for'' and ``cf'' refers to the joint distribution that we are evaluating on. Define the uncoordinated-CF feasible set and coordinated-CF feasible set as
\begin{align*}
    \mathcal P_{\mathrm{ind-feasible}}:=\{& (p(x_s, x_r) , p(y_r | \hat{y}_r, x_r), p(\hat{y}_r|x_r)) \in \mathcal P_{\mathrm{ind}} : \\
    & H_{\mathrm{cf}}(Y_r | X_r) \geq H_{\mathrm{for}}(Y_r|X_r) \text{ and } I_{\mathrm{cf}}(\hat{Y}_r ; Y_r | X_r) < I_{\mathrm{cf}}(X_r; Y_d)\}, \\
    \mathcal P_{\mathrm{cor-feasible}}:=\{& (p(x_s, x_r) , p(y_r | \hat{y}_r, x_r), p(\hat{y}_r|x_r)) \in \mathcal P_{\mathrm{cor}} : \\
    & H_{\mathrm{cf}}(Y_r | X_r) \geq H_{\mathrm{for}}(Y_r|X_r) \text{ and } I_{\mathrm{for}}(\hat{Y}_r ; Y_r | X_r) < I_{\mathrm{for}}(\hat{Y}_r, X_r; Y_d)\}.
\end{align*}

Now we are ready to present the main theorem. 
\begin{theorem}[Theorem 4 in \cite{pon26}] \label{thm: main thm}
    For a relay channel $(\mathcal X_s,\mathcal X_r, \mathcal Y_r, \mathcal Y_d, \{p(y_r, y_d | x_s, x_r)\})$, the capacity of the relay channel is 
    \begin{align*}
        C_{\mathrm{prop}} = \max\{R_{\mathrm{U-CF}}, R_{\mathrm{C-CF}}, R_{\mathrm{DF}}\},
    \end{align*}
    where the uncoordinated-compress-forward rate $R_{\mathrm{U-CF}}$ is defined as
    \begin{align*}
        R_{\mathrm{U-CF}} := \max_{(p(x_s, x_r), p(y_r |\hat{y}_r, x_r), p(\hat{y}_r|x_r)) \in \mathcal P_{\mathrm{ind-feasible}}} I_{\mathrm{cf}}(X_s; \hat{Y}_r, Y_d | X_r);
    \end{align*}
    the coordinated-compress-forward rate $R_{\mathrm{C-CF}}$ is defined as 
    \begin{align*}
        R_{\mathrm{C-CF}} := \max_{(p(x_s, x_r), p(y_r |\hat{y}_r, x_r), p(\hat{y}_r|x_r)) \in \mathcal P_{\mathrm{cor-feasible}}} I_{\mathrm{for}}(X_s; \hat{Y}_r, Y_d | X_r);
    \end{align*}
    and the decode-forward rate $R_{\mathrm{DF}}$ is defined as 
    \begin{align*}
        R_{\mathrm{DF}} := \max_{p(x_s, x_r)} \min\{I(X_s; Y_r | X_r), I(X_s, X_r; Y_d)\}
    \end{align*}
    with joint distribution $p(x_s, x_r, y_r, y_d) = p(x_s, x_r) \cdot p(y_r, y_d | x_s, x_r)$.
\end{theorem}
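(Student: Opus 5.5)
The statement above is the claimed capacity formula of \cite{pon26}. Since the goal of this paper is to refute it, I will not try to prove it. Instead I will show that it fails for an explicit DM-RC: I exhibit a channel and an achievable rate $R_{\mathrm{ach}}$ with $R_{\mathrm{ach}} > C_{\mathrm{prop}}$.

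\textbf{The channel.} Take $X_s=(U,V)$ with $U,V\in\{0,1\}$ and $X_r\in\{0,1\}$. The destination output is $Y_d=(U,Z)$, where $Z=X_r\oplus N_2$ is a $\mathrm{BSC}(\epsilon)$ output, so $U$ reaches the destination noiselessly. The relay output is $Y_r=V\oplus N_1$, a $\mathrm{BSC}(\delta)$ output. Here $N_1,N_2$ are independent of each other and of the inputs, and $0<\delta,\epsilon<1/2$.

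\textbf{Achievable rate.} Use partial decode-forward: $U$ is sent directly, and $V$ is sent by decode-forward through the relay. With $U,V,X_r$ i.i.d.\ uniform this gives
\[ R_{\mathrm{ach}} = 1+\min\{1-h_2(\delta),\,1-h_2(\epsilon)\}. \]
This follows from the standard partial decode-forward lower bound, or directly because the system decomposes into a noiseless bit pipe plus a cascade of two BSCs.

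\textbf{Bounding the three terms of $C_{\mathrm{prop}}$.}
\begin{itemize}
\item \emph{$R_{\mathrm{DF}}$.} For every $p(x_s,x_r)$ we have $I(X_s;Y_r|X_r)\le H(Y_r|X_r)-h_2(\delta)\le 1-h_2(\delta)$. Hence $R_{\mathrm{DF}}\le 1-h_2(\delta)<R_{\mathrm{ach}}$, because $1-h_2(\epsilon)>0$.
\item \emph{$R_{\mathrm{U-CF}}$ and $R_{\mathrm{C-CF}}$.} Under either joint distribution ($p_{\mathrm{cf}}$ or $p_{\mathrm{for}}$), $\hat Y_r$ is generated from $(Y_r,X_r)$, and $Y_d$ splits into $U$ and $Z$, which is independent of $X_s$ given $X_r$. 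The chain rule then gives
\[ I(X_s;\hat Y_r,Y_d|X_r)\le H(U|X_r)+I(V;\hat Y_r|U,X_r)\le 1+I(V;\hat Y_r|U,X_r). \]
By data processing, $I(V;\hat Y_r|U,X_r)\le I(Y_r;\hat Y_r|X_r)$. The feasibility constraint bounds the latter by $I(X_r;Y_d)\le 1-h_2(\epsilon)$; in the coordinated case the constraint instead bounds it by $I(\hat Y_r,X_r;Y_d)$, which must first be reduced to the same quantity. So each CF rate is at most $1+1-h_2(\epsilon)$.
\item \emph{Strict gap for CF.} To beat $1+\min\{\cdot,\cdot\}$ strictly, sharpen the data-processing step using Mrs.\ Gerber's lemma, or the strong data-processing inequality for $\mathrm{BSC}(\delta)$. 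Since $V\to Y_r\to\hat Y_r$ with $Y_r=V\oplus N_1$, we get $I(V;\hat Y_r|\cdot)\le (1-2\delta)^2 I(Y_r;\hat Y_r|\cdot)$. Also $I(V;\hat Y_r|\cdot)\le 1-h_2(\delta)$ always holds. Together these give CF rates at most $1+\min\{(1-2\delta)^2(1-h_2(\epsilon)),\,1-h_2(\delta)\}$.
\end{itemize}
Finally, choose $\epsilon=\delta$. Then this CF bound equals $1+(1-2\delta)^2(1-h_2(\delta))$, which is strictly below $R_{\mathrm{ach}}=1+1-h_2(\delta)$ because $(1-2\delta)^2<1$ and $1-h_2(\delta)>0$.

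\textbf{Main obstacle.} The hard step is making the CF bounds rigorous under the paper's nonstandard objects. Four points need care.
\begin{enumerate}
\item I must check that $X_s$–$(Y_r,X_r)$–$\hat Y_r$ is a Markov chain under both $p_{\mathrm{cf}}$ and $p_{\mathrm{for}}$. Both distributions are constructed by attaching a channel $\hat y_r\mid(y_r,x_r)$ to $p_{\mathrm{cf}}(x_s,x_r)p(y_r,y_d|x_s,x_r)$, so the chain should hold. However, the correlated condition and the marginals entering $I_{\mathrm{for}}$ versus $I_{\mathrm{cf}}$ must be matched carefully.
\item For $R_{\mathrm{C-CF}}$, I must bound $I_{\mathrm{for}}(\hat Y_r,X_r;Y_d)$. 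Given $X_r$, the component $Z$ carries no information about $\hat Y_r$. But $U$ may be correlated with $\hat Y_r$ through $V$ when $p(x_s,x_r)$ correlates $U$ and $V$. In that case I would bound instead the sum $I(U,V;\hat Y_r,U|X_r)$, using $H(U,V|X_r)\le 2$ together with the strong data-processing inequality on the $V$-part conditioned on $U$.
\item The strong data-processing inequality must be applied conditionally on $(U,X_r)$, with arbitrary input distributions of $V$. This is valid, since the contraction coefficient $(1-2\delta)^2$ of $\mathrm{BSC}(\delta)$ holds for every input distribution.
\item The strict inequalities in the feasible sets only make the maxima suprema, and the upper bounds above still hold for suprema.
\end{enumerate}
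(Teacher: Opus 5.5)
Your channel is the paper's channel (with $U=B$, $V=A$ and general crossover probabilities). Your achievability argument, your bound on $R_{\mathrm{DF}}$, and your bound on $R_{\mathrm{U-CF}}$ are sound. For U-CF the independence condition forces $U\perp X_r$, so $I(X_r;Y_d)=I(X_r;Z\mid U)\le 1-h_2(\epsilon)$ and your chain closes. The gap is $R_{\mathrm{C-CF}}$, which is where the paper does its real work, and your sketch is missing both ideas needed there.

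The first missing idea is a cancellation. Because $\hat Y_r\perp U\mid(Y_r,X_r)$ and $Z\perp\hat Y_r\mid(U,X_r)$, the two sides of the coordinated constraint decompose as
$I(\hat Y_r;Y_r\mid X_r)=I(\hat Y_r;U\mid X_r)+I(\hat Y_r;Y_r\mid U,X_r)$ and
$I(\hat Y_r,X_r;Y_d)=I(X_r;U,Z)+I(\hat Y_r;U\mid X_r)$.
The common term cancels, and the constraint becomes $I(\hat Y_r;Y_r\mid U,X_r)< I(X_r;U,Z)$. The strong data-processing inequality must then be applied to this conditional quantity: $I(V;\hat Y_r\mid U,X_r)\le\eta\, I(Y_r;\hat Y_r\mid U,X_r)$ with $\eta=(1-2\delta)^2$.

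Your chain instead weakens to the unconditional $I(Y_r;\hat Y_r\mid X_r)$, which puts $\eta\, I(\hat Y_r;U\mid X_r)$ back on the right-hand side. With the obvious estimate $I(\hat Y_r;U\mid X_r)\le H(U\mid X_r)\le 1$, at $\delta=\epsilon=1/4$ you only get about $1+\tfrac14+\tfrac14(1-h_2(1/4))\approx 1.30$. That exceeds $2-h_2(1/4)\approx 1.19$, so no contradiction follows. Your proposed fallback, decomposing $I(U,V;\hat Y_r,U\mid X_r)$ and using $H(U,V\mid X_r)\le 2$, does not supply the missing step: the decomposition is the one you already had, and $2$ exceeds the achievable rate.

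The second missing idea concerns dependence between $X_r$ and $U$. Nothing in $\mathcal P_{\mathrm{cor-feasible}}$ forces $X_r$ to be independent of $X_s$. So $I(X_r;Y_d)\ge I(X_r;U)$ can be as large as $1$ (take $X_r=U$), and the constraint cannot be ``reduced to'' $1-h_2(\epsilon)$. Nor can you bound $H(U\mid X_r)\le 1$ and the constraint side separately. The paper trades one against the other:
\[
H(U\mid X_r)+\eta\, I(X_r;U,Z)=H(U)-(1-\eta)\,I(U;X_r)+\eta\, I(X_r;Z\mid U)\le 1+\eta\bigl(1-h_2(\epsilon)\bigr).
\]
Correlating the relay input with the direct bit loosens the constraint by $I(U;X_r)$, but costs the same amount in $H(U\mid X_r)$, of which only an $\eta$-fraction is recovered. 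With these two steps your claimed CF bound $1+(1-2\delta)^2(1-h_2(\epsilon))$ does hold for both CF rates.

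The paper also never treats U-CF separately. Since $I(X_r;Y_d)\le I(\hat Y_r,X_r;Y_d)$, the U-CF constraint implies the C-CF one. It therefore relaxes both to a single problem over all distributions of the form $p_1(a,b,x_r)\,W_{1/4}(y_r\mid a)\,W_{1/4}(z\mid x_r)\,p_2(\hat y_r\mid y_r,x_r)$. This also disposes of your first listed obstacle about $p_{\mathrm{cf}}$ versus $p_{\mathrm{for}}$.
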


\section{Counterexample and Proofs} \label{sec: counterexample and proofs}
In this section, we present a relay channel whose capacity is strictly larger than the proposed capacity in Theorem \ref{thm: main thm}, thereby serving as a counterexample to Theorem 4 in \cite{pon26}.% Again, we emphasize that the counterexample is constructed by ChatGPT-6 Astra.

\subsection{Counterexample Relay Channel} \label{sec: counterexample channel}
Consider the following relay channel $\mathcal R$. Let the (sender) source input be $X_s = (A,B) \in \{0,1\}^2$ be a binary pair, and also let the relay input $X_r \in \{0,1\}$ be binary. The relay channel outputs
\begin{align*}
    Y_r = A \oplus N_1, \qquad Y_d = (B, Z) = (B, X_r \oplus N_2),
\end{align*}
where $N_1, N_2 \sim \mathrm{Bern}(1/4)$ are mutually independent noises, independent of the current channel inputs, with fresh independent copies across channel uses. It is clear that $\mathcal X_s = \{0,1\}^2 = \mathcal Y_d$, $\mathcal X_r = \{0,1\} = \mathcal Y_r$. The relay channel $\mathcal R$ is depicted in Figure \ref{fig:counterexample-relay}.

\begin{figure}[ht]
    \centering
    \begin{tikzpicture}[
        >=Stealth,
        line width=0.7pt,
        block/.style={
            draw,
            minimum height=0.85cm,
            align=center,
            inner sep=6pt
        },
        lab/.style={
            midway,
            above=4pt,
            inner sep=1pt
        }
    ]
        % Main blocks
        \node[block, minimum width=1.8cm]
            (encoder) at (-5,0) {Encoder};

        \node[block, minimum width=3.8cm]
            (channel) at (0,0)
            {$p(y_r,(b,z)\mid(a,b),x_r)$};

        \node[block, minimum width=1.8cm]
            (decoder) at (5,0) {Decoder};

        \node[block, minimum width=2.8cm]
            (relay) at (0,2.8) {Relay encoder};

        % Horizontal links
        \draw[->] (-7,0)
            -- node[lab] {$M$} (encoder.west);

        \draw[->] (encoder.east)
            -- node[lab] {$(A,B)$} (channel.west);

        \draw[->] (channel.east)
            -- node[lab] {$(B,Z)$} (decoder.west);

        \draw[->] (decoder.east)
            -- node[lab] {$\hat M$} (7,0);

        % Vertical relay links
        \draw[->] ([xshift=-0.9cm]channel.north)
            -- node[midway,left=5pt] {$Y_r$}
            ([xshift=-0.9cm]relay.south);

        \draw[->] ([xshift=0.9cm]relay.south)
            -- node[midway,right=5pt] {$X_r$}
            ([xshift=0.9cm]channel.north);
    \end{tikzpicture}
    \caption{Counterexample relay channel $\mathcal R$.}
    \label{fig:counterexample-relay}
\end{figure}

\subsection{Lower Bound for the Capacity of $\mathcal R$} \label{sec: lower bound}

In this section, we describe an achievability scheme for the above relay channel, hence illustrating a lower bound for its capacity. 

\begin{lemma}
    Let $\mathcal R$ be the relay channel defined in Section \ref{sec: counterexample channel}. Then $C(\mathcal R) \geq 2 - h_2(1/4)$.
\end{lemma}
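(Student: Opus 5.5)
The plan is to exhibit a partial decode-forward scheme and invoke the standard partial decode-forward lower bound. The bound says that for any $p(u,x_s,x_r)$,
\[
C \geq \min\{ I(X_s,X_r;Y_d),\; I(U;Y_r|X_r) + I(X_s;Y_d|X_r,U)\}.
\]
This is the classical result of El Gamal and Kim's Network Information Theory, Theorem 16.3, originally due to Cover--El Gamal. It is not stated in the present paper, so I would cite it as a known achievability result.

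The intended operation is simple. The bit $B$ reaches the destination noiselessly and carries $1$ bit per use. The bit $A$ reaches the relay through a $\mathrm{BSC}(1/4)$. The relay decodes the $A$-part of the message and forwards it over its own $\mathrm{BSC}(1/4)$ link $Z = X_r\oplus N_2$.

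Concretely, I would take $A$, $B$, $X_r$ i.i.d.\ $\mathrm{Bern}(1/2)$ and set $U = A$. Then I compute each term in turn.
\begin{itemize}
\item \textbf{First term.} Since $Y_d=(B,X_r\oplus N_2)$ and $B$ is independent of $(X_r,N_2)$,
\[
I(X_s,X_r;Y_d) = H(B) + H(X_r\oplus N_2) - H(N_2) = 1 + 1 - h_2(1/4) = 2-h_2(1/4).
\]
\item \textbf{Relay-decoding term.} Since $Y_r=A\oplus N_1$ does not depend on $X_r$,
\[
I(U;Y_r|X_r)=I(A;A\oplus N_1) = 1-h_2(1/4).
\]
\item \textbf{Direct term.} $I(X_s;Y_d|X_r,U)=I(A,B;B,Z|X_r,A)=H(B)=1$.
\end{itemize}
The minimum of the two expressions is therefore $2-h_2(1/4)$, which proves the lemma.

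An alternative that avoids citing partial DF is to describe the block-Markov scheme directly. The message is split as $M=(M_1,M_2)$. The component $M_2$ (rate $1$) is sent uncoded on $B$. The component $M_1$ (rate $<1-h_2(1/4)$) is sent on $A$ with a capacity-achieving $\mathrm{BSC}(1/4)$ code. The relay decodes $M_1$ from block $j$ and re-encodes it in block $j+1$ using a second $\mathrm{BSC}(1/4)$ code on $X_r$. The destination decodes $M_1$ from $Z$ with one block of delay. Because the two links are orthogonal, the error analysis reduces to two point-to-point channel-coding theorems, and the rate loss of $1/b$ over $b$ blocks vanishes as $b\to\infty$.

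No step is a real obstacle. The only care needed is to confirm that the cited partial DF theorem applies to a DM-RC with a vector-valued $X_s$, which it does since $\mathcal X_s=\{0,1\}^2$ is finite. Independence of $N_1,N_2$ is also not actually needed for this computation. If a self-contained argument is preferred, I would use the explicit block-Markov description instead.
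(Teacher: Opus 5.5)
Your proof is correct. Your fallback block-Markov argument is essentially the paper's proof. The paper describes the scheme operationally: $B$ is sent uncoded at one bit per use, and a sub-message on $A$ at rate below $1-h_2(1/4)$ is decoded by the relay and re-sent over $X_r\to Z$ in the next block, with the pipelining overhead vanishing. It leaves the error analysis implicit in the orthogonality of the two links.

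Your primary route instead plugs $U=A$, with $A,B,X_r$ i.i.d.\ $\mathrm{Bern}(1/2)$, into the single-letter partial decode-forward bound. The proof then reduces to three mutual-information evaluations, which you carry out correctly; both arguments of the minimum equal $2-h_2(1/4)$.

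Your route gives rigor by reference: the block-Markov bookkeeping and error analysis are delegated to a standard theorem. It also makes precise the paper's own description of the scheme as partial decode-forward, including why $U=A$ rather than $U=X_s$ is the right choice. With $U=X_s$ the relay would have to decode $B$, which it never observes, and you recover only the $1-h_2(1/4)$ that the paper computes for $R_{\mathrm{DF}}$. The paper's direct description, by contrast, is self-contained and shows plainly that nothing beyond two point-to-point channel coding theorems is needed.

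Your remark that independence of $N_1,N_2$ is irrelevant is also right. The partial decode-forward rate depends only on the marginals $p(y_r|x_s,x_r)$ and $p(y_d|x_s,x_r)$.
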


\begin{proof}
The component $B$ supports one bit per channel use over the noiseless source-destination link. Independently, the source transmits a sub-message through $A \to Y_r$ at any rate below $1 - h_2(1/4)$. The relay decodes this sub-message from one coding block and forwards it through $X_r \to Z$ in the following block. By pipelining over sufficiently many blocks, the initialization and termination overhead vanishes. Thus, the total rate can be arbitrarily close to $2 - h_2(1/4)$.
% We may treat the two information bits $(A, B)$ separately. Note that the channel $\mathcal R$ transmit the information bit $B$ noiselessly to the receiver (up to a permutation in the coordinate). Therefore, we can always use exactly 1 bit to transmit $B$. On the other hand, the channel $\mathcal R$ can be viewed as sending $A$ to $Y_r$ through a binary symmetric channel $\mathrm{BSC}(1/4)$, then decode-forward and send the relay source $X_r$ to $Z$ through another binary symmetric channel $\mathrm{BSC}(1/4)$. Therefore, the rate for transmitting $A$ is
% \begin{align*}
%     \min\{C_{A\to Y_r}, C_{X_r \to Z}\} = \min\{1- h_2(1/4), 1-h_2(1/4)\} = 1-h_2(1/4)
% \end{align*}
% as the channel capacity of $\mathrm{BSC}(\delta)$ is $1 - h_2(\delta)$. As they are conducted in parallel, the overall rate is $1 + 1 - h_2(1/4) = 2 - h_2(1/4)$.
\end{proof}

\begin{remark}
    One may show that the capacity of $\mathcal R$ is exactly $C(\mathcal R) = 2 - h_2(1/4)$ by considering the cutset upper bound as in \cite{el11}. However, a lower bound for $C(\mathcal R)$ suffices to serve as a counterexample.
\end{remark}
\subsection{Capacity from \cite{pon26}}

We upper bounds the quantities $R_{\mathrm{DF}}, R_{\mathrm{U-CF}}, R_{\mathrm{C-CF}}$, respectively. Recall that 
\begin{align*}
    R_{\mathrm{DF}} & = \max_{p(x_s, x_r)} \min\{I(X_s; Y_r | X_r), I(X_s, X_r; Y_d)\} \\
    & = \max_{p(a,b,x_r)} \min\{I(A,B; Y_r| X_r), I(A, B, X_r; B, Z)\}.
\end{align*}
For any input distribution $p(a,b,x_r)$, we have
\begin{align*}
    I(A, B; Y_r |X_r) & = H(Y_r |X_r) - H(Y_r | A, B, X_r) \\
    & = H(Y_r | X_r) - H(Y_r | A) \\
    & \leq H(Y_r) - H(N_1) \\
    & \leq 1 - h_2(1/4).
\end{align*}
On the other hand,
\begin{align*}
    I(A, B, X_r; B, Z) & = H(B, Z) - H(B, Z| A,B, X_r) \\
    & \leq H(B) + H(Z) - \left( \underbrace{H(B|A,B, X_r)}_{= 0} +  H(Z|A,B,X_r)\right) \\
    & \leq 2 - H(Z|X_r) \\
    & = 2 - H(N_2) \\
    & = 2 - h_2(1/4).
\end{align*}
Since this is true for all input distribution $p(a,b,x)$,
\begin{align*}
    R_{\mathrm{DF}} \leq \max_{p(a,b,x)} \min\{1-h_2(1/4), 2-h_2(1/4)\} = 1 -h_2(1/4).
\end{align*}

Recall the conditions
\begin{align}
    I(\hat{Y}_r; Y_r | X_r) & < I(X_r; Y_d) = I(X_r; B, Z) \label{eq: ind-feasible condition} \\
    I(\hat{Y}_r; Y_r | X_r) & < I(\hat{Y}_r; X_r; Y_d) = I(\hat{Y}_r, X_r; B, Z) \label{eq: cor-feasible condition}
\end{align}
that appear in the definition of $\mathcal P_{\mathrm{ind-feasible}}$ and $\mathcal P_{\mathrm{cor-feasible}}$. Note that we always have
\begin{align*}
    I(\hat{Y}_r, X_r; B,Z) = I(X_r; B, Z) + I(\hat{Y}_r; B, Z | X_r) \geq I(X_r; B, Z),
\end{align*}
so for any fixed joint distribution, condition (\ref{eq: ind-feasible condition}) always implies condition (\ref{eq: cor-feasible condition}). Also, note that the objective function in the definition of $R_{\mathrm{U-CF}}$ and $R_{\mathrm{C-CF}}$ is 
\begin{align*}
    I(X_s; \hat{Y}_r, Y_d | X_r) = I(A, B; \hat{Y}_r, B, Z | X_r),
\end{align*}
we relax the maximization formulation to
\begin{align}
    \max \quad & I(A, B; \hat{Y}_r, B, Z | X_r) \label{eq: optimization problem relaxation} \\ 
    \text{subject to} \quad & I(\hat{Y}_r ; Y_r | X_r) \leq I(\hat{Y}_r, X_r; B, Z) \label{eq: optimization constraint} \\
    & p(a,b,x_r, \hat{y}_r, y_r, z) = p_1(a,b, x_r) \cdot W_{1/4}(y_r|a) \cdot W_{1/4}(z|x_r) \cdot p_2(\hat{y}_r| y_r, x_r), \label{eq: factorization form}
\end{align}
for some distribution $p_1, p_2$, and $W_{1/4}$ denote the stochastic kernel that corresponds to the binary symmetric channel $\mathrm{BSC}(1/4)$. Since we have considered a larger feasible set, if $I^*$ is the optimal value to the optimization problem (\ref{eq: optimization problem relaxation}), then $R_{\mathrm{U-CF}}, R_{\mathrm{C-CF}} \leq I^*$.

For joint distribution $p(a,b,x_r,\hat{y}_r, y_r, z)$ that satisfy the factorization in (\ref{eq: factorization form}), we have the following conditional independence structures:
\begin{align}
    Z \perp (A, B, Y_r, \hat{Y}_r) | X_r \label{eq: markov 1} \\
    \hat{Y}_r \perp (A,B) |(Y_r , X_r) \label{eq: markov 2}.
\end{align}
So we can rewrite the objective function (\ref{eq: optimization problem relaxation}) as 
\begin{align}
    I(A, B; \hat{Y}_r, B, Z| X_r) &= I(A, B; \hat{Y}_r, B | X_r) + \underbrace{I(A, B;Z|\hat{Y}_r, B, X_r)}_{=0 \text{ by (\ref{eq: markov 1})}} \nonumber \\
    & = I(A, B; B | X_r) + I(A, B; \hat{Y}_r | X_r, B) \nonumber \\
    & = H(B|X_r) - \underbrace{H(B|A,B,X_r)}_{=0} + I(A; \hat{Y}_r | X_r, B). \label{eq: rewrite objective function}
\end{align}

Next, the left-hand side of (\ref{eq: optimization constraint}) can be rewritten as 
\begin{align*}
    I(\hat{Y}_r; Y_r | X_r) & =I(\hat{Y}_r ; Y_r | X_r) + \underbrace{I(\hat{Y}_r; B|Y_r , X_r)}_{=0 \text{ by (\ref{eq: markov 2})}} \\
    & = I(\hat{Y}_r; B, Y_r | X_r) \\
    & = I(\hat{Y}_r ; B | X_r) + I(\hat{Y}_r ; Y_r | B, X_r).
\end{align*}
And right-hand side of (\ref{eq: optimization constraint}) can also be rewritten as 
\begin{align*}
    I(\hat{Y}_r, X_r; B, Z) & = I(X_r; B, Z) + I(\hat{Y}_r; B, Z | X_r) \\
    & = I(X_r; B, Z) + I(\hat{Y}_r; B | X_r) + \underbrace{I(\hat{Y}_r; Z|X_r, B)}_{=0 \text{ by (\ref{eq: markov 1})}} \\
    & = I(X_r; B, Z) + I(\hat{Y}_r; B | X_r).
\end{align*}
Subtracting $I(\hat{Y}_r; B | X_r)$ from both sides of (\ref{eq: optimization constraint}) simplifies the constraint to the form
\begin{align}
    I(\hat{Y}_r; Y_r | B, X_r) \leq I(X_r; B, Z). \label{eq: new constraint}
\end{align}

Next, we state the post-processing strong data-processing inequality (SDPI).
\begin{theorem}[Post-SDPI \cite{pol25}] \label{thm: SDPI}
    Given a conditional distribution $P_{Y|X}$, the input-free contraction coefficients is defined as 
    \begin{align*}
        \eta_{\KL}^{(p)}(P_{Y|X}) = \sup_{P_X, P_{U|Y}: I(U;Y) > 0} \left\{ \frac{I(U;X)}{I(U;Y)} : X \to Y \to U \text{ is a Markov chain}\right\}.
    \end{align*}
    For the binary symmetric channel $\mathrm{BSC}(\delta)$, we have
    \begin{align*}
        \eta_{\KL}^{(p)}(\mathrm{BSC}(\delta)) = (1 - 2\delta)^2.
    \end{align*}
\end{theorem}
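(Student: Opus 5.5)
The plan is to prove matching lower and upper bounds on $\eta_{\KL}^{(p)}(\mathrm{BSC}(\delta))$. First we reinterpret the definition. Under the Markov chain $X \to Y \to U$, the ratio $I(U;X)/I(U;Y)$ is the ordinary SDPI ratio for the \emph{reverse} channel $P_{X|Y}$ with input law $P_Y$. Hence
\begin{align*}
\eta_{\KL}^{(p)}(P_{Y|X}) = \sup_{P_X} \eta_{\KL}\bigl(P_Y, P_{X|Y}\bigr).
\end{align*}
For $P_{Y|X} = \mathrm{BSC}(\delta)$ and $P_X = \mathrm{Bern}(p)$, the reverse channel is a binary, generally asymmetric, channel. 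Only when $p = 1/2$ is it again $\mathrm{BSC}(\delta)$ with uniform input.

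For the lower bound we take $P_X$ uniform, so that the reverse channel is $\mathrm{BSC}(\delta)$ with uniform input. We choose $U$ binary with $P_{Y|U=u} = \mathrm{Bern}(1/2 \pm \epsilon)$, each value of $u$ equally likely. Then $P_{X|U=u} = \mathrm{Bern}(1/2 \pm (1-2\delta)\epsilon)$. Expanding $1 - h_2(1/2 + t) = \tfrac{2}{\ln 2} t^2 + O(t^4)$ gives
\begin{align*}
\frac{I(U;X)}{I(U;Y)} \to (1-2\delta)^2 \quad \text{as } \epsilon \to 0 .
\end{align*}
Equivalently, this uses $\eta_{\KL} \geq \eta_{\chi^2} = \rho_{\max}^2$, and the maximal correlation of a uniform-input BSC equals $1-2\delta$.

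For the upper bound, set $\eta = (1-2\delta)^2$. We must show $I(U;X) \le \eta\, I(U;Y)$ for every $P_X$ and every $P_{U|Y}$. Conditioning on $U=u$ tilts the joint law to
\begin{align*}
P_{XY|U=u}(x,y) \propto P_X(x)\, W_\delta(y|x)\, f_u(y),
\end{align*}
and $P_{XY} = \mathbb{E}_U[P_{XY|U}]$. Write $F(Q) := H_Q(X) - \eta H_Q(Y)$. Then
\begin{align*}
I(U;X) - \eta I(U;Y) = F(P_{XY}) - \mathbb{E}_U\bigl[F(P_{XY|U})\bigr].
\end{align*}
So it suffices that $F$ is convex along the one-parameter family of $y$-tilts of $P_X W_\delta$. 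Up to scale, a $y$-tilt of a binary $Y$ is determined by a single parameter, and this must hold for every base $P_X$.

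Concretely, parametrize the tilted law by its $Y$-marginal $q = \Pr(Y=1) \in (0,1)$. The $X$-marginal $r(q)$ is then a linear-fractional function of $q$, fixed by the base law. The required statement becomes
\begin{align*}
\frac{d^2}{dq^2}\Bigl[h_2(r(q)) - \eta\, h_2(q)\Bigr] \geq 0 \quad \text{for all } q .
\end{align*}
Because $P_{XY|U}$ again has the form $P_X W_\delta$ tilted, it suffices to verify this at every point of every such family. I would compute the derivative explicitly and reduce it to an inequality between $r'(q)^2/(r(1-r))$ and $\eta/(q(1-q))$, using $h_2'' = -1/(\ln 2\, x(1-x))$ and adding the term involving $r''$.

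This verification is the main obstacle. A reverse channel at non-uniform $P_X$ is not symmetric, and the local $\chi^2$ coefficient alone does not control the KL coefficient. My first attempt would be to exploit the structure of the joint law, $P_{XY}(x,y) \propto a(x)\, b(y)\, \delta^{\mathbb{1}[x\neq y]}(1-\delta)^{\mathbb{1}[x=y]}$, which is symmetric in the roles of $a$ and $b$. Writing the odds ratios in log form, $\lambda = \log\frac{r}{1-r}$ and $\mu = \log\frac{q}{1-q}$, the map $\mu \mapsto \lambda$ is the composition of a tanh-type function with a shift by $\log\frac{1-\delta}{\delta}$. The inequality should then reduce to $\tanh$-type bounds of the form $\bigl(\tanh'\bigr)^2 \le (1-2\delta)^2 \cdot(\cdots)$, maximized at the symmetric point.

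If this direct calculus becomes unwieldy, the fallback is to invoke Mrs.\ Gerber's Lemma in its reverse (input-free) form. That form yields the same constant $(1-2\delta)^2$ via convexity of $t \mapsto h_2\bigl(\delta * h_2^{-1}(t)\bigr)$, with the extremal slope attained at $t \to 1$, which matches the lower bound construction above.
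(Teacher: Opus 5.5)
The paper gives no proof of this statement; it imports it from \cite{pol25}. Your lower bound (uniform $P_X$, $P_{Y|U=u}=\mathrm{Bern}(1/2\pm\epsilon)$) is complete and correct. Your Jensen reduction for the upper bound is also sound.

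By the Markov chain, $P_{XY|U=u}(x,y)=P_{Y|U}(y|u)\,P_{X|Y}(x|y)$, so every conditional law lies on a single segment indexed by $q=\Pr(Y=1\mid U=u)$. Moreover $r(q)=\sum_y P_{X|Y}(1|y)\Pr(Y=y\mid U=u)$ is \emph{affine} in $q$, not genuinely linear-fractional, so there is no $r''$ term. Convexity of $g(q)=h_2(r(q))-\eta h_2(q)$ is therefore equivalent to $q(1-q)\,r'(q)^2\le \eta\, r(1-r)$. Since $q(1-q)\,r'^2/\bigl(r(1-r)\bigr)=\rho^2$, where $\rho$ is the correlation coefficient of the joint law at $q$, the condition is exactly $\rho^2\le(1-2\delta)^2$ at every point of every segment.

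\textbf{The gap.} You never prove this pointwise inequality, and it is the entire content of the upper bound beyond the uniform case. It must hold for every $q\in(0,1)$ and every base $P_X$, i.e.\ for joint laws far from the symmetric one. Your log-odds/tanh step only asserts that the relevant quantity is ``maximized at the symmetric point.''

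The Mrs.\ Gerber fallback does not close the gap either. MGL requires $X=Y\oplus N'$ with $N'$ independent of $(Y,U)$. That holds only for uniform $P_X$; otherwise $\Pr(X\neq Y\mid Y=y)$ depends on $y$. So MGL recovers only the case your lower bound already handles.

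\textbf{How to close it.} Every law on a segment has the form $a(x)b(y)W_\delta(y|x)$, so all of them share the cross-ratio
\[
\frac{p_{00}p_{11}}{p_{01}p_{10}}=\Bigl(\frac{1-\delta}{\delta}\Bigr)^2 .
\]
By Cauchy--Schwarz, $(p_{00}+p_{01})(p_{11}+p_{10})\ge(\sqrt{p_{00}p_{11}}+\sqrt{p_{01}p_{10}})^2$, and the same bound holds for the product of the column sums. Hence
\[
|\rho|=\frac{|p_{00}p_{11}-p_{01}p_{10}|}{\sqrt{(p_{00}+p_{01})(p_{10}+p_{11})(p_{00}+p_{10})(p_{01}+p_{11})}}\le\frac{|\sqrt{p_{00}p_{11}}-\sqrt{p_{01}p_{10}}|}{\sqrt{p_{00}p_{11}}+\sqrt{p_{01}p_{10}}}=|1-2\delta| .
\]
This is Yule's coefficient of colligation. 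With it, $g$ is convex on $[0,1]$ for every $P_X$, and Jensen yields $I(U;X)\le(1-2\delta)^2 I(U;Y)$. Together with your lower bound, your argument then becomes a complete, self-contained proof, which is more than the paper itself provides.
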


In particular, conditioned on any pair of $(b,x_r)$ and marginalizing (\ref{eq: factorization form}) over $z$ gives
\begin{align*}
    p(a, y_r, \hat{y}_r| B = b, X_r = x_r) = p_1(a|b, x_r) \cdot W_{1/4}(y_r|a) \cdot p_2(\hat{y}_r | y_r, x_r),
\end{align*}
so $A \to Y_r \to \hat{Y}_r$ forms a Markov chain with $P_{Y_r|A} = \mathrm{BSC}(1/4)$. Therefore, applying Theorem \ref{thm: SDPI} yields
\begin{align*}
    I(A; \hat{Y}_r | B = b , X_r =  x_r) \leq \frac{1}{4}I(Y_r; \hat{Y}_r | B = b, X_r = x_r).
\end{align*}
Summing and averaging over all $(b,x_r)$ with positive probability yields
\begin{align*}
    I(A; \hat{Y}_r | B, X_r) \leq \frac{1}{4}I(Y_r; \hat{Y}_r | B, X_r). \label{eq: contraction from SDPI}
\end{align*}
Therefore, the simplified constraint (\ref{eq: new constraint}) and using Theorem \ref{thm: SDPI}, the expression in (\ref{eq: rewrite objective function}) can be further simplified as

\begin{align*}
    I(A, B;\hat{Y}_r, B, Z | X_r) & = H(B|X_r) + I(A;\hat{Y}_r | X_r, B) \\
    & \leq H(B|X_r) + \frac{1}{4}I(Y_r; \hat{Y}_r | B, X_r) \\
    & \leq H(B|X_r) + \frac{1}{4}I(X_r; B, Z) \\
    & = H(B) - I(B; X_r) + \frac{1}{4}\left[ I(X_r; B) + I(X_r; Z|B)\right] \\
    & = H(B) - \underbrace{\frac{3}{4} I(B; X_r)}_{\geq 0} + \frac{1}{4}\left[H(Z|B) - H(Z|X_r, B) \right] \\
    & \leq H(B) + \frac{1}{4}\left(H(Z|B) - H(Z|X_r)\right) \\
    & \leq 1 + \frac{1}{4}(1 - h_2(1/4)) \\
    & = \frac{5}{4} - \frac{h_2(1/4)}{4}.
\end{align*}

Therefore, according to Theorem \ref{thm: main thm},
\begin{align*}
    C_{\mathrm{prop}} & = \max\{R_{\mathrm{DF}}, R_{\mathrm{U-CF}}, R_{\mathrm{C-CF}}\} \\
    & \leq \max\{1 - h_2(1/4), 5/4 - h_2(1/4) / 4\} \\
    & = \max\{0.18872..., 1.04718... \} \\
    & = 1.04718... \\
    & < 1.18872... \\
    & \leq C(\mathcal R),
\end{align*}
which is strictly less than the achievable rate $2 - h_2(1/4) = 1.18872...$ that we evaluated in Section \ref{sec: lower bound}, hence serving as a counterexample to Theorem \ref{thm: main thm}.

\section{Concluding Remarks} \label{sec: concluding remarks}

We demonstrated a counterexample to Theorem 4 in \cite{pon26}, thereby disproving the correctness of the proposed capacity characterization for general relay channels. In our construction, the relay node decodes only a sub-message by jointly processing its observations within each coding block. Consequently, the probabilities of the decoded estimates are determined by the decoding regions of the source code, not merely by the empirical distributions of the sequences used to label those outputs. This observation corroborates our preceding analysis: the probability assignment used in the converse cannot be transferred to the compression codebook solely on the basis of typicality.

\section*{Acknowledgements}
Chun Hei Michael Shiu would like to thank Prof. Chandra Nair for encouraging him to develop the counterexample into a formal manuscript, and Mr. Hadi Kazemi for carefully checking the counterexample, and for discussing possible simplifications of the proofs.

\newpage

\bibliographystyle{IEEEtran}
\bibliography{references}
\end{document}